 \let\MYoriglatexcaption\caption
 \renewcommand{\caption}[2][\relax]{\MYoriglatexcaption[#2]{#2}}
\newtheorem{theorem}{Theorem}
\newtheorem{proposition}{Proposition}
\newtheorem{remark}{Remark}
\renewcommand{\vec}[1]{\mathbf{#1}}
\begin{document}

\title{On the Secure Degrees of Freedom of the K-user MAC and 2-user Interference Channels }


\author{\IEEEauthorblockN{Mohamed Amir$^1$, Tamer Khattab$^1$ and Tarek Elfouly$^2$}\\
\IEEEauthorblockA{$^1$Electrical Engineering, Qatar University\\
$^2$Computer Science \& Computer Engineering, Qatar University\\
Email: mohamed.amir@qu.edu.qa, tkhattab@ieee.org, tarekfouly@qu.edu.qa}
\thanks{This research was made possible by NPRP 7-923-2-344 grant from the Qatar National Research Fund (a member of The Qatar Foundation). The statements made herein are solely the responsibility of the authors.}}
\vspace{-3mm}
\maketitle
\thispagestyle{empty}
\pagestyle{empty}
\vspace{-3mm}
\begin{abstract}
We investigate the secure degrees of freedom (SDoF) of the $K$-user MIMO multiple access (MAC) and the two user MIMO interference channel. An unknown number of eavesdroppers are trying to decode the messages sent by the transmitters. Each eavesdropper is equipped with a number of antennas less than or equal to a known value $N_E$. The legitimate transmitters and receivers are assumed to have global channel knowledge.  We present the sum SDoF of the two user MIMO interference channel. We derive an upperbound on the sum SDoF of the $K$-user MAC channel and present an achievable scheme that partially meets the derived upperbound.
\vspace{-3mm}
\end{abstract}
\vspace{-3mm}
\section{Introduction}
The noisy wiretap channel was first studied by Wyner \cite{wyner}, in which a
legitimate transmitter (Alice) wishes to send a message to a legitimate receiver (Bob), and hide it from an eavesdropper (Eve). Wyner proved that Alice can send positive secure rate to Bob using channel coding. He derived capacity-equivocation region for the degraded
wiretap channel.  A significant amount of work was carried thereafter to study the information theoretic physical layer security for different network models. The relay assisted wiretap channel was studied in \cite{secop}. The secure degrees of freedom (SDoF) region of multiple access (MAC) channel was presented in \cite{sennur_mac}. The SDoF is the the pre-log of the secrecy capacity region in the high-SNR regime.  Using MIMO systems for securing the message was an intuitive extension due to the spatial gain provided by multiple antennas.  MIMO wiretap channel secrecy capacity was identified in \cite{bab}. Meanwhile, the idea of cooperative jamming was proposed in \cite{yener_coop}, where some of the users transmit independent and identically distributed (i.i.d.) Gaussian noise towards the eavesdropper to improve the sum secrecy rate of the legitimate parties.

In this paper, we study the $K$-user MIMO MAC and the two user MIMO interference channel, each with
unknown number of eavesdroppers. We assume that the legitimate pair has global channel knowledge. We present the sum SDoF of the two user MIMO interference channel. We derive an upperbound on the the sum SDoF of the $K$-user MAC channel and present an achievable scheme that partially meets the upperbound depending on the relations between the nodes' number of antennas. We use the following notation, $\vec{a}$ for vectors, $\vec{A}$ for matrices, $\vec{A}^{\dagger}$ for the hermitian transpose of $\vec{A}$, $[A]^+$ for the $\max{A,0}$ and $\mathrm{Null}(\vec{A})$ to define the nullspace of $\vec{A}$, while$a\textbf{C}b$ is used to define the $b$-combination of a set $a$


\section{System model}
\label{sec:model}

We consider two communication systems, the $K$-user MIMO MAC and the two user MIMO interference channel. The $K$-user MIMO MAC consists of $K$ transmitters, each is equipped with $M$ antennas and one legitimate receiver equipped with $N$ antennas. The two user MIMO interference channel consists of two transmitters and two receivers, each is equipped with $M$ antennas.  Both systems are studied in vicinity of an unknown number of passive eavesdroppers.  The $j$th eavesdropper is equipped with $N_{Ej}\leq N_E$ antennas, where $N_E$ is a constant known to all transmitters. Let $\vec{x}_i$ denote the $M \times 1$ vector of symbols to be transmitted by transmitter $i$, where $i\in\{1,2,...,K\}$. We can
write the received signal at the $j$th legitimate receiver at time (sample) $k$ as

\begin{equation}\label{Received_signal}
\vec{Y}_j(k)=\sum_{i=1}^q\vec{H}_{i,j} \vec{V}_i\vec{x}_i(k)+ \vec{n}_j(k), 
\end{equation}

where $i \in \{1,2,...,K\}, j=1 \text{ and }  q=K$ for the MAC channel, $i, j \in \{1,2\}\text{ and } q=2$ for the interference channel and the received signal at the $j$th eavesdropper is

\begin{equation}\label{Received_signal}
\vec{Z}_{j}(k)=\sum_{i=1}^q\vec{G}_{i,j}(k) \vec{V}_i\vec{x}_i(k)+ \vec{n}_{Ej}(k),
\end{equation}
\noindent where $\vec{H}_{i,j}$ is the matrix containing
the channel coefficients from transmitter $i$ to the legitimate receiver $j$,  $\vec{G}_{i,j}(k)$ is the matrix containing the channel coefficients from transmitter $i$ to the eavesdropper $j$,  $\vec{V}_i$ is the precoding unitary matrix (i.e. $\vec{V}_i\vec{V}_i^\dagger = \vec{I}$) at transmitter $i$, $\vec{n}_j(k)$ and $\vec{n}_{Ej}(k)$ are the additive white Gaussian
noise vectors with zero mean and variance $\sigma^2$ at the legitimate receiver and the $j$th eavesdropper, respectively. We assume that the transmitters have global channel knowledge. We assume that $N_E< M$.  We define the $M \times 1$ channel input from legitimate transmitter $i$ as
\begin{equation}
\vec{X}_i(k)= \vec{V}_i \vec{x}_i(k).
\end{equation}
\noindent Each transmitter $i$ intends to send a message $W_i$ over $n$ channel uses (samples) to the legitimate receiver simultaneously while preventing the eavesdroppers from decoding its message.  The encoding occurs under  a constrained power given by
\begin{equation}
\text{E}\left\{tr(\vec{X}_i\vec{X}_i^{\dagger})\right\} \leq P \text{  }\forall{i=1,...,q}
\end{equation}
\noindent Expanding the notations over $n$ channel extensions we get $\vec{H}_i^n\hspace{-1mm}=\hspace{-1mm}\{\vec{H}_i(1), \vec{H}_i(2),\ldots, \vec{H}_i(n)\}$. Similarly we can define $\vec{G}_{i,j}^{n}, \vec{X}_i^n,\vec{Y}^n, \vec{Z}_{j}^n$. At each transmitter, the message $W_i$ is uniformly and independently chosen from a set of possible secret messages for transmitter $i$, $\mathcal{W}_i = \{1,2, \ldots, 2^{nR_i}\}$.  The rate for $W_i$ is $R_i \triangleq \frac{1}{n} \log\left|\mathcal{W}_i\right|$, where $|\cdot|$ denotes the cardinality of the set.
A secure rate tuple $(R_1,...,R_q)$ is said to be achievable if for any $\epsilon > 0$ there is an $n$-length codes such that the legitimate receiver decode the messages reliably, i.e.,
\begin{equation}
\text{Pr}\{(W_1,..., \hat{W}_q)) \neq (\hat{W}_1,..., \hat{W}_q)\} \leq \epsilon
\end{equation}
and the messages are kept information-theoretically secure against the eavesdroppers, i.e.,
\begin{equation}\label{eqn:cond}
\lim\limits_{n\longrightarrow\infty} \frac{1}{n}H(W_1,..., W_q|\vec{Z}_{j}^{n})\geq \lim\limits_{n\longrightarrow\infty} \frac{1}{n} H(W_1,..., W_q)-\epsilon \\
\end{equation}
\noindent where $H(\cdot)$ is the Entropy function and~\eqref{eqn:cond} implies the secrecy for any subset $\mathbb{S} \subset \{1,2\}$ of messages including individual messages \cite{sennur_mimo}. The sum SDoF is defined as
\begin{equation}
D_s = \lim_{P\rightarrow \infty} \sup{\sum_i \frac{R_i}{\frac{1}{2}\log P}},
\end{equation}
\noindent where the supremum is over all achievable secrecy rate tuples $(R_1,..., R_q)$, $D_s = d_1 +...+ d_q$, and $d_{i}$ is the secure DoF of transmitter $i$. 


\section{K User MIMO MAC}\label{sec:bound}
\begin{theorem}
The number of SDoF of the K user MAC channel is upperbound as,
\small
\begin{eqnarray}
\label{up}D_s\leq \begin{cases} \min (KM-N_E, \; N-\frac{N_E}{K} ) & \text{   if } M < N\\
M-\frac{N_E}{K} & \text{   if } N \leq M < N+\frac{N_E}{K}\\
N & \text{    if } M\geq N+\frac{N_E}{K}\\
\end{cases} 
\end{eqnarray}
\normalsize
\end{theorem}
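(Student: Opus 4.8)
\emph{Proof outline.} I plan to establish the three bounds
\begin{equation}\label{eq:three}
D_s\le N,\qquad D_s\le KM-N_E,\qquad D_s\le \max(M,N)-\tfrac{N_E}{K},
\end{equation}
and to verify that, regime by regime, the minimum of the three coincides with the right-hand side of \eqref{up}: for $M<N$ the third bound equals $N-\tfrac{N_E}{K}<N$, leaving $\min\!\bigl(KM-N_E,\,N-\tfrac{N_E}{K}\bigr)$; for $N\le M<N+\tfrac{N_E}{K}$ the third bound equals $M-\tfrac{N_E}{K}$, which is at most $N$ and (since $M\ge\tfrac{N_E}{K}$) at most $KM-N_E$; for $M\ge N+\tfrac{N_E}{K}$ the first and third bounds are both at least $N$, so only $D_s\le N$ is active. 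All three bounds start from the same estimate. Because secrecy must hold against every admissible eavesdropper, it holds in particular against a single one with exactly $N_E$ antennas and a generic channel, whose $n$-letter observation I denote $\vec{Z}^n$; writing $\vec{W}=(W_1,\dots,W_K)$, Fano's inequality at the legitimate receiver together with \eqref{eqn:cond} for that eavesdropper gives
\begin{equation}\label{eq:start}
n\sum_{i=1}^{K}R_i\le I(\vec{W};\vec{Y}^n\mid\vec{Z}^n)+n\epsilon_n=h(\vec{Y}^n\mid\vec{Z}^n)-h(\vec{Y}^n\mid\vec{W},\vec{Z}^n)+n\epsilon_n,
\end{equation}
with $\epsilon_n\to 0$; dividing by $\tfrac n2\log P$ and sending $n,P\to\infty$ turns each such estimate into a bound on $D_s$.

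\emph{The bounds $D_s\le N$ and $D_s\le KM-N_E$.} For the first, drop the conditioning on $\vec{Z}^n$ in \eqref{eq:start} and bound $h(\vec{Y}^n\mid\vec{W},\vec{Z}^n)\ge h(\vec{Y}^n\mid\vec{W},\vec{Z}^n,\vec{X}_1^n,\dots,\vec{X}_K^n)=O(n)$ (noise entropy, uniform in $P$); since $\vec{Y}^n$ carries $N$ coordinates per channel use with covariance of trace $O(P)$, $h(\vec{Y}^n)\le\tfrac{nN}{2}\log P+O(n)$, yielding $D_s\le N$. For the second, keep the conditioning and exploit that the stacked transmit block $(\vec{X}_1(k),\dots,\vec{X}_K(k))$ occupies $KM$ dimensions while $\vec{Z}^n$ pins down $\min(N_E,KM)=N_E$ generic linear combinations of it (recall $N_E<M\le KM$): a unitary change of coordinates isolating those $N_E$ directions, followed by a channel-enhancement step, bounds $h(\vec{Y}^n\mid\vec{Z}^n)-h(\vec{Y}^n\mid\vec{W},\vec{Z}^n)$ by $\tfrac{n(KM-N_E)}{2}\log P+o(n\log P)$, giving $D_s\le KM-N_E$.

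\emph{The bound $D_s\le\max(M,N)-\tfrac{N_E}{K}$.} This is the technical core and I would obtain it by symmetrizing over the $K$ transmitters. For each $i$ I would build a bound on $n\sum_j R_j$ that singles out transmitter $i$, using the individual secrecy of $W_i$ --- a consequence of \eqref{eqn:cond} --- to retain a $-I(W_i;\vec{Z}^n)$ term rather than discarding it: when $M<N$ the natural such bound is receiver-centric, with leading term $N$ (the receiver's antenna count), while for $M\ge N$ it is transmitter-centric, with leading term $M$ after a null-space count showing that the jamming needed to blind the $N_E$-antenna eavesdropper cannot all be nulled at the legitimate receiver once $M<N+\tfrac{N_E}{K}$; together the leading term is $\max(M,N)$. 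Summing these $K$ inequalities and dividing by $K$, the key point is that the $K$ eavesdropper terms collectively charge only a single $N_E$-dimensional observation --- the $N_E$-antenna eavesdropper's signal space has dimension at most $N_E$ no matter how it is split among the transmitters --- so the aggregate secrecy penalty amortizes to $\tfrac{N_E}{K}\cdot\tfrac n2\log P$, delivering $D_s\le\max(M,N)-\tfrac{N_E}{K}$.

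The main obstacle is exactly this amortization: one must show rigorously that the $K$ eavesdropper differential-entropy terms arising from the $K$ distinct genie choices combine so that their joint growth in $P$ is carried by one $N_E\cdot\tfrac12\log P$ rather than $K$ copies of it, while simultaneously fixing the leading term at $\max(M,N)$, accounting for any stochastic-encoding randomness (so that $h(\vec{Y}^n\mid\vec{W})$ and $h(\vec{Z}^n\mid\vec{W})$ are treated correctly), and keeping every $o(\log P)$ error term uniform in the double limit. By comparison, the receiver bound and the joint-transmitter bound are routine, relying only on Fano's inequality and standard Gaussian differential-entropy estimates.
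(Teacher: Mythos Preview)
Your route and the paper's diverge substantially on the two non-trivial bounds. For $D_s\le KM-N_E$ the paper does not argue via Fano and conditional entropies: it collapses the $K$ transmitters into a single $KM$-antenna node, adds $[M-N]^+$ antennas at the receiver, invokes the closed-form MIMO-wiretap secrecy capacity of Oggier--Hassibi, and reads the DoF off the eigenvalue expansion of $\log|\vec I+\vec H K_x\vec H^\dagger|-\log|\vec I+\vec G K_x\vec G^\dagger|$. Both arguments are valid; yours is self-contained, the paper's is a direct citation.

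For the $\tfrac{N_E}{K}$ penalty the paper does \emph{not} run an information-theoretic symmetrization. It uses an operational signal-space count: the portion of user $i$'s signal decodable at the eavesdropper is split as $d_e^i=\alpha_i+\beta_i$, where $\alpha_i$ also occupies the receiver's space and $\beta_i$ does not (so $\beta_i\le[M-N]^+$); the eavesdropper's $N_E$ antennas force $\sum_i(\alpha_i+\beta_i)=N_E$; and the receiver constraint is $\sum_i d_i+\max_i\alpha_i\le N$. Minimizing $\max_i\alpha_i$ under these constraints (the equal split) gives $\max_i\alpha_i\ge\tfrac{N_E}{K}-[M-N]^+$, hence $D_s\le\max(M,N)-\tfrac{N_E}{K}$. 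So the $\tfrac{1}{K}$ factor arises combinatorially from $\sum_i d_e^i=N_E$ together with a max-over-equal-split, not from averaging $K$ entropy inequalities. This is dimension-counting rather than an $n$-letter converse, so it is less rigorous than what you are aiming for, but it entirely sidesteps the amortization obstacle you flag. If you want to push your symmetrization through, you must actually exhibit the $K$ per-user inequalities with their $\max(M,N)$ leading terms and show the $K$ eavesdropper penalties telescope into a single $h(\vec Z^n)$; your sketch asserts this step but does not carry it out, and it is exactly the place where such arguments typically fail without a carefully chosen genie.
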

\begin{proof}

The first bound for $D_s \leq KM-N_E$ represent the DoF loss caused by the number of eavesdroppers' antennas on the transmitter side. Without loss of generality, we provide an upperbound for the case of existence of only one eavesdropper with $N_E$ antennas. The SDoF of the single eavesdropper scenario is certainly an upperbound for the multiple eavesdroppers case, as increasing the number of eavesdroppers can only reduce the SDoF of the legitimate users. Accordingly, we omit the eavesdropper subscript for simplicity of notation. Suppose that we can added $|M-N|^+$ antennas to the receiver side that won't decrease the SDoF, the sum rate is upperbounded by the capacity of an equivalent MIMO wiretap channel with $(M_1+M_2)$ transmit antennas and $\vec{H}=[\vec{H}_1 \text{ } \vec{H}_2 ]$, $\vec{x}=[\vec{x}_1 \text{ }\vec{x}_2 ... { }\vec{x}_K]^T$ and precoding matrix $\vec{V}$. The secrecy capacity ($C_s$)for the MIMO wiretap channel with one eavesdropper and fixed known eavesdropper channel  was presented in \cite{bab}, and is an upperbound for all cases studied in this paper. It is easy to see that if the eavesdropper channel is unknown and time varying the SDoF is also upperbounded by the fixed channel case. The secrecy capacity ($C_s$) was found to be equal to,
\small
\begin{eqnarray}
C_s&=&(\vec{X}_1, \vec{Y}_1 ) - I(\vec{X}_1, \vec{Z} )\\
&=&\max_{K_x} \log (|(\vec{I}+\vec{H}_{1,1} K_x \vec{H}_{1,1}^{\dagger})| -\log |(\vec{I}+\vec{G} K_x \vec{G}^{\dagger})|)
\end{eqnarray}
\normalsize
where $K_x$ is the covariance matrix of the transmitted signal. As $\vec{H}^{\dagger}_{11}\vec{H}_{11}$ and $\vec{G}^{\dagger}\vec{G}$ are hermitian, they can be diagonalized as $\vec{G}^{\dagger}\vec{G}= \vec{U}_G\vec{\Lambda}_G\vec{U}_G^{\dagger}$, $\vec{H}^{\dagger}_{11}\vec{H}_{11}= \vec{U}_{H_{11}}\vec{\Lambda}_{H_{11}}\vec{U}_{H_{11}}^{\dagger}$, where $\vec{U}_G\vec{U}_G^{\dagger}=\vec{I}$ and $\vec{U}_G\vec{U}_G^{\dagger}=\vec{I}$. Without loss of generality, Let $\vec{V}= [\vec{V}_L \vec{V}_N]$, where $\vec{V}_N$ contains the $N_E$ orthonormal basis of $\vec{G}$, while $\vec{V}_L$ contains the $M-N_E$ basis of the orthogonal complement of $\vec{V}_N$, and  $K_x = \vec{V}\vec{\Lambda}_{K_x}\vec{V}^{\dagger}$.  Therefore,
\vspace{-.5mm}
\small
\begin{eqnarray}
\label{eq2}\nonumber D_s\hspace{-3mm}&\leq& \hspace{-2.5mm}\lim_{P\rightarrow \infty} \frac{1}{\log P} \big(\max_{\vec{\Lambda}_{K_x})} (\log | \vec{I}+\vec{U}_{H_{11}}\vec{\Lambda}_{H_{11}}\vec{U}_{H_{11}}^{\dagger} \vec{V}\vec{\Lambda}_{K_x}\vec{V}^{\dagger} | \\
\nonumber &-&\hspace{-2.5mm}\log | \vec{I}+\vec{U}_G\vec{\Lambda}_G \vec{U}_G^{\dagger}\vec{V}\vec{\Lambda}_{K_x}\vec{V}^{\dagger}|)\big)\\
\nonumber&\overset{(a)}{\leq}&\hspace{-2.5mm}\lim_{P\rightarrow \infty}\frac{1}{\log P} \big(\max_{\vec{\Lambda}_{K_x}}( \log  |\vec{\Lambda}_{H_{11}}\vec{\Lambda}_{K_x}| -\log  | \vec{\Lambda}_G \vec{\Lambda}_{K_x}| \big)\\
\nonumber &\overset{(b)}{\leq}&\hspace{-2.5mm}\lim_{P\rightarrow \infty} \frac{1}{\log P}\big(\max_{\vec{\Lambda}_{K_x}} (\log\prod_{i=1}^{KM}{\lambda_{{H}_{11}}^i\lambda_{K_x}^i}- \log \prod_{i=1}^{N_E}{\lambda_{G}^i\lambda_{K_x}^i})\big)\\
\nonumber &\leq&\hspace{-2.5mm}\lim_{P\rightarrow \infty} \frac{1}{\log P}\big(\max_{\vec{\Lambda}_{K_x}} (\sum_{i=1}^{KM}\log{\lambda_{{H}_{11}}^i\lambda_{K_x}^i}- \sum_{i=1}^{N_E}\log {\lambda_{G}^i\lambda_{K_x}^i})\big)\\
\label{up2}&\leq& \hspace{-2.5mm} KM-N_E 
\end{eqnarray}
\normalsize
where $\lambda^i_{K_x}$ is the $i$th diagonal value of $\Lambda_{K_x}$ and $\lambda_{G}^i, \lambda_{H}^i$ are defined similarly. (a) is because $\log |\vec{I}+\vec{A}\vec{B}|= \log |\vec{I}+\vec{B}\vec{A}|$ for the above matrices, (b) is because $\lim\limits_{P\rightarrow \infty} \frac{\log |\vec{I}+\vec{B}|}{\log P} = \lim\limits_{P\rightarrow \infty} \frac{\log |\vec{B}|}{\log P}  $ for any matrix $\vec{B}$, and because $|\vec{A}\vec{B}|= |\vec{A}||\vec{B}|$ for square matrices, and ${|\vec{V}_{K_x}|, |\vec{V}_{H_{11}}|, |\vec{U}| }$ are independent of $P$.
\end{proof}
\vspace{-3mm}
\noindent The second bound $M-\frac{N_E}{K}$ represents the DoF loss of each transmitter due to the number of eavesdroppers antennas available. Let $d_e^i$ be degrees of freedom of the parts of the messages sent by transmitter $i$, which can be decoded by the eavesdropper. For the receiver to be able to decode the secure messages with inter-message interference and achieve the designated SDoF of each transmitter, the receiver must be able to project the $i$th secure signal into an interference free space of dimension $d_i$. On the other hand, the non secure parts of the messages can overlap at the receiver or even does not reach the receiver because the receiver is not interested in decoding them and treated as interference. Let $\alpha_i$ be the number of degrees of freedom of the non secure part of the message $i$ that reaches the receivers while $\beta_i$ be the number of degrees of freedom of the part that does not reach the receiver. Accordingly, $d_e^i= \alpha_i + \beta_i$ and the number of degrees of freedom of the message $i$ is equal to $(d_i+ \alpha_i + \beta_i)$. Since, the receiver is not interested in decoding the non secure parts with sizes $\{\alpha_i; i= 1, 2,..., K\}$, and the non secure messages occupy at least $\max(\alpha_j; j = 1, 2,...,K)$ DoF, then
\begin{equation}\label{z1}
\sum^K_{i=0}d_{i}+\max(\alpha_j; j = 1, 2,...,K) \leq N
\end{equation}
while,
\begin{equation}\label{z1}
\beta_i\leq M-N \text{ } \forall \text{ } i=1, 2,..., K
\end{equation}

Moreover, since the eavesdropper has $N_E < M$ antennas, i.e the Dof of the transmitted messages is larger than $N_E$ then 
\begin{equation} \label{eve}
\sum^K_{i=0}d_{e}^i=N_E
\end{equation}

The secure DoF is then upperbounded as 

\small
\begin{eqnarray}
D_s\leq&\text{maximize} &{\{d_j; j = 1, 2,...,K\}}\sum^K_{i=0}d_{i}\\
&\text{subject to,} &\sum^K_{i=0}d_{i}+\max(\alpha_j; j = 1, 2,...,K) \leq N\\
\nonumber&&\sum^K_{i=0}(\alpha_i +\beta_i)=N_E\\
\nonumber&&\sum^K_{i=0}\beta_i \leq K(M-N)
\end{eqnarray}
\normalsize
The sum SDoF is maximized by minimizing $\max(\alpha_i; j = 1,2,...,K)$. Combining the second and third constraint we have 
\begin{equation}
\sum^K_{i=0}\alpha_i \geq N_E - K(M-N)\\
\end{equation}

and minimizing $\max(\alpha_i; j = 1,2,...,K)$ is achieved by choosing all $\{\alpha_i; j = 1,2,...,K\}$ to be equal, and $\sum^K_{i=0}\alpha_i= N_E - K(M-N)$. Accordingly, 
\begin{equation}
\max(\alpha_j; j = 1, 2,...,K) \leq  \frac{N_E}{K} - (M-N) \\
\end{equation}
and,
\begin{equation}\label{c3}
D_s\leq M-\frac{N_E}{K}
\end{equation}
Similarly, we can prove that for $M\leq N$, the SDoF is upperbounded as,

\begin{equation}\label{c3}
D_s\leq N-\frac{N_E}{K}
\end{equation} 
\normalsize

The Third bound $D_s \leq N$ is the due to limited number of antennas at the receiver which limits the SDoF.\\
\normalsize

\noindent\textit{Achievable scheme}:\\
For securing the legitimate messages, the transmitters uses a two-step noise injection by simultaneously sending a jamming signal and using a stochastic encoder as follows,

\begin{enumerate}
\item The transmitters send a jamming signal with power $P^J=\alpha P$ that guarantees that all eavesdropper have a constant rate ($o(log P)$) for all legitimate signal power values,  where $\alpha$ is a constant controlled the transmitters to adjust the jamming.
\item A stochastic encoder is built using random binning. The encoder randomness rate is designed to be larger than of the post-jamming eavesdroppers leakage, hence all eavesdroppers would have zero rate with the code length goes to infinity meeting the secrecy constraints in \eqref{eqn:cond}.
\end{enumerate} 
The jamming signal transmitted is a $N_E$ vector $\vec{r}=[\vec{r}_1 \text{ } \vec{r}_2 ... \text{ } \vec{r}_K]^T$ with random symbols using \{$\vec{V}^J_{1}$, $\vec{V}^J_{2}$, , $\vec{V}^J_{K}$\} as jamming precoders\footnote{For the special case $N_E=1$, only one user sends a single jamming symbol.}. Hence, the transmitted coded signal can be broken into legitimate signal, $\vec{s}_i$, and jamming signal, $\vec{r}_i$, the precoder, $\vec{V}_i$ can be also broken into legitimate precoder, $\vec{V}^L_i$, and jamming precoder, $\vec{V}^J_i$ such that
$$ \vec{V}_i\vec{x}_i = \left[\begin{array}{cc} \vec{V}^L_i &  \vec{V}^J_i \end{array}\right]  \left[\begin{array}{c} \vec{s}_i\\  \vec{r}_i \end{array}\right],  \in\{1,2,...,K\}.$$
Choosing $\vec{V}^J$ to be the unitary matrix, the jamming power becomes $P^J=\text{E}\{\text{tr}(\vec{r}_i\vec{r}_i^{\dagger})\} = \alpha P$, where $\alpha$ is a constant controlled by the transmitter.
\vspace{-1.5mm}
\begin{proposition}
The jamming signal, $\vec{r}$, overwhelms \textit{all} eavesdroppers' signal space, and all eavesdroppers end up decoding zero DoF of the legitimate messages. The transmitter then uses a stochastic encoder to satisfy the secrecy constraint in \eqref{eqn:cond}
\end{proposition}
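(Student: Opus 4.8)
The plan is to establish the proposition in two logically separate parts that mirror the two-step scheme described above: first, a deterministic (DoF-level) argument that the jamming signal $\vec{r}$ saturates the entire received signal space of every eavesdropper, and second, a coding argument that a stochastic encoder with a suitably chosen binning rate drives the residual leakage to zero. For the first part, I would fix an arbitrary eavesdropper $j$ with $N_{Ej}\le N_E$ antennas and write its received signal as $\vec{Z}_j(k)=\sum_i \vec{G}_{i,j}\vec{V}^L_i\vec{s}_i(k)+\sum_i \vec{G}_{i,j}\vec{V}^J_i\vec{r}_i(k)+\vec{n}_{Ej}(k)$. The key observation is that the stacked jamming channel $[\vec{G}_{1,j}\vec{V}^J_1\ \cdots\ \vec{G}_{K,j}\vec{V}^J_K]$ is, generically, of full row rank $N_{Ej}$ because the $\vec{r}_i$ together carry $N_E\ge N_{Ej}$ independent symbols and the channel matrices are generic; hence the aggregate jamming term occupies all $N_{Ej}$ dimensions of the eavesdropper's observation and arrives at power $\Theta(P)$, the same scaling as the legitimate symbols. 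Therefore $I(\vec{s};\vec{Z}_j)$ cannot grow with $P$: more precisely, $I(\vec{s}^n;\vec{Z}^n_j)\le h(\vec{Z}^n_j)-h(\vec{Z}^n_j\mid \vec{s}^n)$, and since conditioning on $\vec{s}^n$ leaves the full-rank jamming-plus-noise term, both entropies scale as $\frac{n N_{Ej}}{2}\log P + o(\log P)$ and cancel, giving $I(\vec{s}^n;\vec{Z}^n_j)=o(\log P)$, i.e. zero DoF leaked — uniformly over the unknown number of eavesdroppers and their unknown, possibly time-varying channels, since the bound only used $N_{Ej}\le N_E$ and genericity.

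For the second part, I would invoke the standard wiretap random-binning construction: each message $W_i$ is mapped to a codeword via a stochastic encoder carrying an auxiliary randomness rate $R_i^r$. Writing the post-jamming leakage as $\Delta_n \triangleq \frac{1}{n}I(W_1,\dots,W_K;\vec{Z}^n_j)$, the equivocation analysis shows that if $R_i^r$ exceeds the per-user post-jamming leakage rate (which, by the first part, is $o(\log P)$ in DoF terms but is a genuine finite rate determined by the constant-rate residual channel the eavesdropper sees), then $\Delta_n\to 0$ as $n\to\infty$, which is exactly the secrecy condition \eqref{eqn:cond}. The fact that \eqref{eqn:cond} for the joint message implies secrecy of every subset $\mathbb{S}$, including individual messages, is the observation already cited from \cite{sennur_mimo}, so the single joint-equivocation bound suffices. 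The randomness rate spent here does not cost any secure DoF because it is $o(\log P)$ and is absorbed into the jamming/non-secure dimensions already accounted for in the achievability bookkeeping.

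The main obstacle I anticipate is making the genericity / full-rank claim airtight when the number of eavesdroppers is unknown and their channels may be time-varying and are not known to the transmitter: the jamming precoders $\vec{V}^J_i$ must be chosen without knowledge of $\vec{G}_{i,j}$, so one must argue that for almost every realization of the eavesdropper channels the stacked matrix $[\vec{G}_{1,j}\vec{V}^J_1\ \cdots\ \vec{G}_{K,j}\vec{V}^J_K]$ has rank $\min(N_{Ej}, \sum_i \dim \vec{r}_i)=N_{Ej}$, and that this holds simultaneously for all eavesdroppers (a countable-union-of-measure-zero argument, or a fixed generic choice that works for the Zariski-generic channel). A secondary subtlety is ensuring the jamming does not simultaneously wipe out the legitimate receiver's ability to decode $\vec{s}$ at the target DoF — but that is handled by the precoder design in the achievable scheme (the legitimate receiver has enough antennas, or the jamming is aligned into a subspace it can null) and is outside the scope of this proposition, which concerns only the eavesdroppers. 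I would close by remarking that combining the zero-leakage-DoF property with the binning argument yields that all eavesdroppers decode zero DoF while \eqref{eqn:cond} holds, completing the proof.
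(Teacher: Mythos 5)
Your proposal follows essentially the same route as the paper: bound the strongest eavesdropper's leakage via $I(\vec{s};\vec{Z}) = h(\vec{Z})-h(\vec{Z}\mid\vec{s}) \leq N_E\log P - N_E\log \alpha P + o(\log P) = o(\log P)$, since the jamming with power $P^J=\alpha P$ fills all $N_E$ eavesdropper dimensions, and then absorb this constant residual rate with a Wyner random-binning stochastic encoder so that \eqref{eqn:cond} holds. The only difference is that you make explicit the full-rank/genericity condition on the effective jamming channel $[\vec{G}_{1,j}\vec{V}^J_1\ \cdots\ \vec{G}_{K,j}\vec{V}^J_K]$, which the paper assumes implicitly; this is a reasonable and indeed welcome tightening rather than a departure.
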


Let $\bar{R}_e= I(\vec{Z}; \vec{s}_1, \vec{s}_2,..., \vec{s}_K)$ be the rate of the eavesdropper with the best channel assuming in worst case scenario that it also has $N_E$ antennas. Let $R_e = I(\vec{Z}; \vec{W}_1, \vec{W}_2,...,\vec{W}_K)$ be the legitimate message rate of the same eavesdropper, where $R_e < \bar{R}_e$ because of the stochastic encoder used. let $\bar{R}_ej$ be the rate of the $j$th eavesdropper. Then $\bar{R}_ej\leq \bar{R}_e \forall j \in L$, where $L$ is the unknown number of eavesdroppers.
\begin{proof}
\small
\begin{eqnarray}
\nonumber n \bar{R}_e &\leq& I(\vec{Z}^n; \vec{s}_1^n, \vec{s}_2^n,...,\vec{s}_K^n)\\
\nonumber  & =& h(\vec{Z}^n) -h(\vec{Z}^n|\vec{s}_1^n, \vec{s}_2^n,...,\vec{s}_K^n)\\
\nonumber \bar{R}_e  & =& h(\vec{Z}) -h(\vec{Z}|\vec{s}_1, \vec{s}_2,...,\vec{s}_K)\\
\nonumber &\leq& N_E \log P -N_E log P^J + o(\log P)  \\
\nonumber     &\leq& N_E \log P -N_E \log \alpha P +o(\log P) \\
\label{eve}&\leq& o(\log P)=C_E  
\end{eqnarray}
\normalsize
\noindent where $C_E$ is a constant that does not depend on $P$ and known to the transmitters. 
\end{proof}
\begin{remark}
The constant eavesdropper post-jamming rate comes from the fact that $P^J$ is controlled by the transmitter.  Hence, setting $P^J=\alpha P$, a constant SNR is guaranteed at the eavesdroppers and a constant rate independent of $P$. For the case of the constant known eavesdropper channel or unknown fading channel with known statistics, the constant $C$ is known transmitter.
\end{remark}

The transmitters use the post-jamming rate difference to transmit perfectly secure messages using a stochastic encoder similar to the one described in~\cite{khan} according to the strongest eavesdropper's rate, $C$, in worst case scenario to achieve the secrecy constraint in \ref{eqn:cond} .
The Wyner code $C_i \in C(R^t_i, R_i, n) \forall i=1,2,...,K$ of size $2^{nR^t_i}$ is used to encode a confidential message set $W_i = \{1, 2, . . . , 2^{nR_i} \}$ of transmitter $i$, $R^t_i$ is the transmitted total rate and $R_i$ the secure message rate (i.e. $R^t_i \geq R_i$), and $n$ is the codeword length. As a result, the rate $R^l=R^t_i - R_i$ is the cost of secrecy or the rate lost to secure the legitimate message. For a Wyner code, if $\hat{R}_e= R_l$, then the eavesdropper cannot decode the secure message sent (i.e $\lim_{n\longrightarrow \infty} \frac{1}{n} R_e \leq \epsilon)$.
The Wyner code $C(R^t_i, R_i, N)$ is built using random binning \cite{9}. We generate $2^{nR^t_i}$ codewords $s_i^n(w_i, v_i)$, where $w_i = 1, 2, . . ., 2^{nR_i} $,
and $v_i = 1, 2, . . ., 2^{n(R^t_i-R_i)}$, by choosing the $2^{nR_i^t}$ symbols $s_i(w_i, v_i)$ independently at random according to the input distribution $p(s_i)$. Then we distribute them randomly into  $2^{nR_i}$ bins such that each bin contains $2^{n(R_i^t-R_i)} $ codewords.
The stochastic encoder of $C(R_i^t, R_i, N)$ is described by a matrix of conditional
probabilities so that, given $w_i \in W_i$, we randomly and uniformly select a codeword to transmit  from the bin $w_i$ or in other words, we select $v_i$ from
$\{1, 2, . . . , 2^{n(R^t_i-R_i)}\}$ and transmit $s_i^n(w_i, v_i)$. We assume that the legitimate
receiver employs a typical-set decoder. Given the received signal $y^n$, the legitimate
receiver tries to find a pair $(\hat{w} , \hat{v})$ so that $s^n(\hat{w} , \hat{v})$ and $y^n$ are jointly typical \cite{9}. We set $R_i= I(\vec{s}_i, \vec{Y}_1)- I(\vec{s}_i,\vec{Z}) -\epsilon$ and  $R_i^t=I(\vec{s}_i, \vec{Y}_1)-\epsilon$. The error probability and equivocation calculations are
straight forward extensions of similar Wyner random binning
encoders \cite{9},

\begin{eqnarray}
H(\vec{W_i}^n)\hspace{-2mm} &=& I(\vec{s}_i^n; \vec{Y}^n) - I(\vec{s}_i^n;\vec{Z}^n) -m\epsilon\\
H(\vec{W_i}^n|\vec{Z}^n)&=&I(\vec{s}_i^n; \vec{Y}^n|\vec{Z}^n) - I(\vec{s}_i^n;\vec{Z}^n|\vec{Z}^n)-n\epsilon\hspace{4mm}\\
&=& I(\vec{s}_i^n; \vec{Y}^n, \vec{Z}^n) - I(\vec{s}_i^n,\vec{Z}^n)-n\epsilon\\
&\geq& H(\vec{W_i}^n) -n\epsilon
\end{eqnarray}
and,
\small
\begin{eqnarray}
H(\vec{W_1}^n,..., \vec{W_K}^n|\vec{Z}^n)\hspace{-2mm} &=& \hspace{-2mm}\sum_i^K H(\vec{W_i}^n|\vec{Z}^n)\hspace{2mm}\\
&\geq& \hspace{-2mm} \sum_i^K H(\vec{W_i}^n) -Kn\epsilon\\
&\geq& \hspace{-2mm} H(\vec{W_1}^n, ...,H(\vec{W_K}^n) -Kn\epsilon \hspace{2mm}
\end{eqnarray}. 
\normalsize
Let $\vec{U}$ be the post-processing matrix that projects the signal into a jamming free space at the legitimate receiver. The secure messages sum rate is then,

\footnotesize
 \begin{eqnarray} \label{secrate}
\sum_{i=1}^K R_i \hspace{-7mm}&& \geq \frac{1}{2} \log \left| \vec{I}+ \sum_{i=1}^K (\vec{U}\vec{H}_i\vec{V}_i^L\vec{s}_i\vec{s}^{\dagger}_i\vec{V}^{L\dagger}_i\vec{H}_i^{\dagger}\vec{U}^{\dagger})\right| -R_e
\end{eqnarray}
\normalsize

As $\lim_{n \longrightarrow \infty} \frac{1}{n} R_e \leq \epsilon$ for all values of $\vec{G}_i$ and $P$, a positive secrecy rate, which is monotonically increasing with $P$, is achieved. Computing the secrecy degrees of freedom boils down to calculating the degrees of freedom for the first term in the right hand side of \eqref{secrate}, which represents the receiver DoF after jamming is applied. Next we will calculate the SDoF and show how jamming is designed to Maximize the achievable SDoF.

\subsection{Achievability for $M \geq N+\frac{N_E}{K}$}

For this region, the transmitters send the jamming signals using precoders $\vec{V}^J_{i}$, using \textit{Nullspace jamming}  , respectively. All the precoders have $\frac{N_E}{K}$ jamming streams such that the total number of jamming streams reaching each eavesdropper equal $N_E$. 

\subsubsection*{Nullspace jamming}
In nullspace jamming method, the transmitter $i$ sends a jamming signal of $J_i$ dimensions using the precoder $\vec{V}^J_{i}$ which lies in the nullspace of the channel $\vec{H}_i$,
\vspace{-2mm}
\begin{eqnarray}
\vec{V}^J_i=\text{\textit{Null}} (\vec{H}_i) \text{ } i\in 1,2,....,K
\end{eqnarray}
\noindent This blocks $N_E$ dimensions at each eavesdropper and leaves $N$ free dimensions at the legitimate receiver to attain the legitimate signal, thus the following sum SDoF is achievable,
\begin{eqnarray}
D_s &\leq& N
\end{eqnarray}

\subsection{Achievability for $M<N$}
For this region we use \textit{aligned jamming} for blocking the eavesdropper where jamming is aligned at the receiver to minimize the wasted space and maximize the SDoF

\subsubsection*{Aligned jamming}
The jamming signals of both transmitters are aligned at the legitimate receiver signal space. Each group $j$ of transmitters of size $L_j\leq K$ aligns portions of its jamming signal together at the receiver. There are $K\boldsymbol{C}L_j$ groups of size $L$, while the total number of groups $\sum_{a=2, b\leq a}^{a=K} i\boldsymbol{C}j$, the number of jamming of streams assigned to each group $(J_g/N_E)$ depends on the relation between $(M, N, N_E)$.
Let $\mathcal{I}_j$ be the jamming space at the receiver designated for group $j$. Each transmitter aligns a part or the whole of its jamming signal into this jamming space. The total signal space of transmitter $i$ occupies  \emph{only} $M<N$ dimensions at the receiver. This make the received signal spaces of different transmitters distinct at the receiver. So a common space is needed to direct the jamming signal into. Let $A_i$ span the received signal space of transmitter $i$, i.e span the space including all possible received vectors at the receiver, $\mathcal{I}$ is chosen to be the intersection of these spaces, i.e.,
\begin{equation}\label{first}
\mathcal{I}_j= \bigcap_{i=1}^{L_j} A_i.
\end{equation}
$\mathcal{I}_j$ would have positive size only if $M\geq N$.
Without loss of generality, we design $(\vec{V}_i^J,\text{ } i=1,2,...,K)$ such that,
\begin{eqnarray}\label{two}
\vec{H}_{1,1}\vec{V}^J_1= \vec{H}_{2,1}\vec{V}^J_2=...= \vec{H}_{L_j,1}\vec{V}^J_{L_j}= \mathcal{I}_j
\end{eqnarray}
While the system of equations in (\eqref{two}) has more variables than the number of equations,~\eqref{first} ensures that the system has a unique solution as $\mathcal{I}_{j}$ lies in the spans of $(\vec{H}_{i,1}; i=1,2,...,K$.

Let
\begin{equation}
 \vec{H}_{i,1}=
\begin{bmatrix}
 \vec{H}_{i,1}^{'}\\
 \vec{H}_{i,1}^{''}
\end{bmatrix}  \mathcal{I}=
\begin{bmatrix}
 \mathcal{I}_j' \\
\mathcal{I}_j''
\end{bmatrix}
\;\; \forall \; i=1,2,...,K,\text{ } j=1,2,..,L_j
\end{equation}
where $\vec{H}_{i,1}'$ contains the  $M$ rows of $\vec{H}_{i,1}$ and $\vec{H}_{i,1}''$ contains the other $N-M$ rows, and $\mathcal{I}_i'$ contains the  $M$ rows of $\mathcal{I}$ and $\mathcal{I}_i''$ contains the other $N-M$ rows. Therefore, we can choose the following design which satisfies ~\eqref{two}
\begin{equation}\label{last}
\vec{V}^J_i =(\vec{H}_{i,1}')^{-1}\mathcal{I}_j' \;\forall\text{ } i=1,2,...,K,\text{ } j=1,2,..,L_j
\end{equation}
\noindent For the legitimate receiver to remove the jamming signal and decode the legitimate message, it zero forces the jamming signal using the post-processing matrix $\vec{U}$. For the case $N_E$ is odd, each transmitter will align its jamming signal into an $\lfloor \frac{N_E}{2} \rfloor$--dimensional half space using linear alignment.  The remaining $1$dimensional space will be equally shared between the two transmitters' jamming signal using real interference alignment~\cite{sennur_helpers}, yielding each transmitter's jamming signal to occupy $\frac{N_E}{2}$.

The jamming alignment is possible for group $j$ the size of intersection of $j$ is greater than zero or 
\begin{equation}
L_j(M-N)+M \geq 0
\end{equation}

where the number of jamming streams $J_j$ that can be sent by each group is constrained to

\begin{equation}
J_j \leq L_j( L_j(M-N)+M) 
\end{equation}
where the $J_j$ streams wastes $\frac{J_j}{L_j}$ dimensions at the receiver for jamming. For maximizing the achievable SDoF, the group with the smaller ratio $\frac{J_j}{L_j}$ is used for jamming first.

The alignment process begins with assigning the maximum number of jamming streams to the largest possible group as it can align the largest number of jamming streams per one dimension wasted at the receiver.

\begin{equation}
D_s\leq \min(KM-NE, \; N-\frac{N_E}{L}),  
\end{equation}
where $L$ is the result of the following optimization,
\begin{eqnarray}\label{Ldef}
&\text{maximize} &{L}\\
&\text{subject to,} &\frac{M}{N-M}\leq L \leq K\\
&&N_E\leq L(L(M-N+M))
\end{eqnarray}

The previous scheme meets the upperbound in \ref{up} at (($L=K$ or  $\frac{M}{N-M}\leq K$) and $N_E\leq K(K(M-N)+M)$) and at ($KM-N_E< N-\frac{N}{N_E}$) achieving $D_s=\min(KM-N_E, N-\frac{N}{N_E})$ .

\subsection{Achievability for $ N+\frac{N_E}{K}> M>N$ }

In this region the transmitters uses both the aligned and Nullspace jamming methods, each transmitter sends $M-N$ jamming streams using Nullspace jamming and sends $\frac{N_E}{K}-(M-N)$ jamming streams using aligned jamming.

The achievable SDoF is then

\begin{equation}
D_s \leq N-\frac{N_E-K(M-N)}{L}
\end{equation}

\begin{equation}
D_s \leq N-\frac{N_E-K(M-N)}{L}
\end{equation}

where $L$ is defined as in \eqref{Ldef}, and the achievable region meets the upperbound at ($L=K$ and $N_E\leq K((K+1)(M-N)+M)$) and at $(K(2M-N)-N_E< N-\frac{N}{N_E})$.

\begin{eqnarray}
D_s &\leq& N-\frac{N_E-K(M-N)}{K}\\
&\leq& M-\frac{N_E}{K}
\end{eqnarray}

\section{The two user $M \times M$ interference channel}\label{sec:IC}
\begin{theorem}
The number of SDoF of the two user $M \times M$ interference channel is upperbound as,
\small
\begin{equation}
\! d_1+d_2\!\leq \!  M-\frac{N_E}{2}
\end{equation}
\normalfont
\end{theorem}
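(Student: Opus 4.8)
\noindent\emph{Proof sketch.} The plan is to reuse the dimension-counting argument behind the $M-\frac{N_E}{K}$ bound of Theorem~1, specialized to $K=2$ users and to the fact that here every legitimate receiver, like every transmitter, has exactly $M$ antennas. As in Theorem~1 it is enough to treat a single eavesdropper with $N_E$ antennas, since extra eavesdroppers only reduce the sum SDoF; I would drop the eavesdropper index. I would split the DoF of message $i\in\{1,2\}$ into the secure part $d_i$ and the part $d_e^i$ that the eavesdropper can decode, and write $d_e^i=\alpha_i+\beta_i$, with $\alpha_i$ (resp.\ $\beta_i$) the number of those DoF that do (resp.\ do not) reach the legitimate receivers. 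The new ingredient relative to the MAC is that the direct and cross channels $\vec{H}_{i,j}$ are generic $M\times M$ matrices, hence invertible with trivial null space: every transmitted stream reaches \emph{both} receivers, so $\beta_i=0$ and $d_e^i=\alpha_i$. The eavesdropper-side count of Theorem~1, which used $N_E<M$, then reads $\alpha_1+\alpha_2=\sum_i d_e^i=N_E$.

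Next I would carry out a per-receiver dimension count. Receiver $i$ must place the secure streams of $W_i$ into an interference-free subspace of dimension $d_i$ inside its $M$-dimensional observation. The whole signal of transmitter $\bar i$ arrives at receiver $i$ through the invertible cross channel $\vec{H}_{\bar i,i}$, so it spans exactly $d_{\bar i}+\alpha_{\bar i}$ dimensions there, and none of them may fall in the interference-free subspace used for $W_i$; transmitter $i$'s own non-secure part $\alpha_i$ may be aligned on top of this interference and only tightens the bound. Hence $d_i+d_{\bar i}+\alpha_{\bar i}\le M$ for $i=1,2$. Adding the two inequalities and using $\alpha_1+\alpha_2=N_E$ gives
\begin{equation}
2(d_1+d_2)+N_E\le 2M,
\end{equation}
that is, $d_1+d_2\le M-\frac{N_E}{2}$.

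The step I expect to be the crux is the per-receiver count: one has to justify that transmitter $\bar i$ cannot compress or hide its overhead from receiver $i$, i.e.\ that $\vec{H}_{\bar i,i}\,[\vec{V}^L_{\bar i}\;\vec{V}^J_{\bar i}]$ has full column rank $d_{\bar i}+\alpha_{\bar i}$ and that this subspace is necessarily disjoint from the subspace decoded by receiver $i$. This is precisely where $N_E<M$ and the genericity of the channels enter, and where the interference channel departs from the MAC: there is no channel null space into which the overhead can be absorbed, so $\beta_i=0$. Making this count rigorous amounts to the standard genie-aided conversion of the heuristic DoF argument into an entropy inequality, along the lines already used for the MAC; the single-eavesdropper reduction and the identity $\sum_i d_e^i=N_E$ then transfer essentially verbatim.
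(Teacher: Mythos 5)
Your proposal is correct and takes essentially the same route as the paper: the paper's proof is exactly your two per-receiver counting inequalities, $d_1+d_2+d_e^2\le M$ and $d_1+d_2+d_e^1\le M$ (obtained because the full-rank $M\times M$ cross channel makes the interfering transmitter's entire signal occupy a space disjoint from the $d_i$ securely decoded dimensions), combined with $d_e^1+d_e^2=N_E$ and summed. Your detour through the MAC's $\alpha_i/\beta_i$ decomposition with $\beta_i=0$ is just a repackaging of the same dimension count, at the same (heuristic) level of rigor as the paper.
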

\begin{proof}
  Let $d_e^1$ and $d_e^2$ be the maximum degrees of freedom that the eavesdropper can decode out of the transmitters one and two signals, respectively. Suppose that we added $M-N$ antennas to receiver one, this can only improve the coding scheme rate. As receiver one fully receive the signal sent by transmitter two to receiver two $X_2$ with modified noise, and $X_1$ can be decoded by receiver one with no interference by definition. Then $d_1$ and $X_2$occupies two distinct spaces at receiver one, the SDoF is upperbounded then as 
\begin{equation}\label{int1}
d_{1}+\max(d_{e}^1 ,d_{2}+d_{e}^2) \leq M
\end{equation}
and for both results of $\max(d_{e}^1 ,d_{2}+d_{e}^2)$, the following is true 
\begin{equation}\label{z7}
d_{1}+d_{2}+d_{e}^2 \leq M
\end{equation}
\\
Similarly, by adding $M-N$ antennas to receiver two, we have
\begin{equation}\label{z2}
d_2+ d_{1}+d_{e}^1  \leq M 
\end{equation}

Moreover, since the eavesdropper has $N_E$ antennas then 
\begin{equation}\label{z3}
d_{e}^1+d_{e}^2=N_E
\end{equation}

Combining (\ref{z1}), (\ref{z2}), (\ref{z3})
\begin{equation}\label{z4}
d_1+d_2 \leq M-\frac{N_E}{2} 
\end{equation}
\end{proof}

\begin{theorem}
For the two user $M \times M$ interference channel, the following number of SDoF is achievable
\begin{equation}
\! d_1+d_2\!\leq \! M-\frac{N_E}{2}
\end{equation}
\end{theorem}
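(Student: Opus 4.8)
The plan is to realize the generic two-step construction of Section~\ref{sec:bound} — a cooperative jamming signal followed by a Wyner stochastic encoder — on the $M\times M$ interference channel, and then to reduce the SDoF claim to a purely deterministic alignment problem at the two legitimate receivers. By the Proposition and the equivocation bounds accompanying \eqref{secrate}, it suffices to exhibit jamming precoders $\vec{V}^J_1,\vec{V}^J_2$ such that (i) the $N_E$ jamming streams fill the entire observation space of every eavesdropper (so that its rate is $o(\log P)$ and the Wyner code drives the leakage to zero), and (ii) after each legitimate receiver zero-forces the jamming, a total of $M-\tfrac{N_E}{2}$ interference-free dimensions remain for the legitimate streams $\vec{s}_1,\vec{s}_2$.

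First I would fix the jamming. Assume $N_E$ even (the odd case is treated at the end) and let each transmitter $i$ send $\tfrac{N_E}{2}$ jamming symbols $\vec{r}_i$ at power $P^J=\alpha P$ through $\vec{V}^J_i$. Since every channel matrix here is $M\times M$ and generically invertible, nullspace jamming is unavailable, so aligned jamming must be used; to keep the jamming footprint at the receivers as small as possible I require the two jamming images to coincide at each receiver, i.e. $\mathrm{span}(\vec{H}_{1,1}\vec{V}^J_1)=\mathrm{span}(\vec{H}_{2,1}\vec{V}^J_2)$ at receiver~1 and $\mathrm{span}(\vec{H}_{1,2}\vec{V}^J_1)=\mathrm{span}(\vec{H}_{2,2}\vec{V}^J_2)$ at receiver~2. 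Eliminating $\vec{V}^J_2=\vec{H}_{2,1}^{-1}\vec{H}_{1,1}\vec{V}^J_1$ from the first condition, the second forces $\mathrm{span}(\vec{V}^J_1)$ to be an invariant subspace of $\vec{T}:=\vec{H}_{1,2}^{-1}\vec{H}_{2,2}\vec{H}_{2,1}^{-1}\vec{H}_{1,1}$; I take $\vec{V}^J_1$ to be any $\tfrac{N_E}{2}$ eigenvectors of the generically diagonalizable $\vec{T}$. For a generic eavesdropper $j$ with $N_{Ej}\le N_E$ antennas, the composite jamming channel $[\,\vec{G}_{1,j}\vec{V}^J_1\;\;\vec{G}_{2,j}\vec{V}^J_2\,]$ is an $N_{Ej}\times N_E$ matrix of full row rank $N_{Ej}$ — crucially, the $\vec{G}_{i,j}$, even if unknown and time varying, are statistically independent of $\vec{V}^J_i$, which depends only on the $\vec{H}_{i,j}$ — so the jamming occupies the eavesdropper's whole signal space, giving step (i) exactly as in the proof of the Proposition.

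Next I would handle the legitimate signals. At receiver~1 the jamming now lives in an $\tfrac{N_E}{2}$-dimensional subspace $\mathcal{J}_1$; receiver~1 applies a post-processing matrix $\vec{U}_1$ projecting onto $\mathcal{J}_1^{\perp}$, leaving $M-\tfrac{N_E}{2}$ dimensions, and similarly for receiver~2. Picking any split $d_1+d_2=M-\tfrac{N_E}{2}$ and generic legitimate precoders $\vec{V}^L_1,\vec{V}^L_2$, the images of the desired and interfering streams inside $\mathcal{J}_1^{\perp}$ are in general position, so the $(M-\tfrac{N_E}{2})\times(M-\tfrac{N_E}{2})$ effective channel at receiver~1 is invertible and $\vec{s}_1$ is recovered; likewise $\vec{s}_2$ at receiver~2. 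Substituting this post-jamming rate into \eqref{secrate} yields $d_1+d_2\ge M-\tfrac{N_E}{2}$, which with the converse of the previous theorem gives equality. For odd $N_E$, as in the MAC construction each transmitter aligns $\lfloor N_E/2\rfloor$ jamming streams linearly by the same eigenvector procedure and the residual one-dimensional stream is shared between the two transmitters by real interference alignment~\cite{sennur_helpers}, so each transmitter still effectively consumes $N_E/2$ receiver dimensions.

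I expect the crux to be step~(i): arranging a \emph{single} jamming subspace $\mathrm{span}(\vec{V}^J_1)$ whose images align at \emph{both} receivers is exactly the eigenvector condition above, and one must also check the supporting genericity facts — that $\vec{T}$ possesses $M$ independent eigenvectors, that all the $\vec{H}_{i,j}$ are invertible, and that the projected desired/interference subspaces at each receiver are in general position — each of which holds for almost every channel realization but should be stated explicitly. A secondary point requiring care is verifying that the stochastic-encoder rates $R_i=I(\vec{s}_i,\vec{Y}_j)-I(\vec{s}_i,\vec{Z})-\epsilon$ remain positive and attain the claimed DoF uniformly over the possibly adversarial eavesdropper channels, which follows because the eavesdropper leakage is $o(\log P)$ regardless of the $\vec{G}_{i,j}$.
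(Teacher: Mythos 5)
Your proposal is correct and follows the same strategy as the paper: align the two transmitters' jamming so that it occupies only $\tfrac{N_E}{2}$ dimensions at each legitimate receiver, argue that the $N_E$ jamming streams saturate every eavesdropper's observation space, and then invoke the same Wyner stochastic encoder and equivocation computation used for the MAC. The difference is in how the alignment is justified, and here your version is actually tighter than the paper's. The paper writes the alignment as two exact matrix equalities, $\vec{H}_{21}\vec{V}_1^J=\vec{H}_{22}\vec{V}_2^J$ and $\vec{H}_{11}\vec{V}_1^J=\vec{H}_{12}\vec{V}_2^J$; eliminating $\vec{V}_2^J$ shows these force $(\vec{H}_{21}-\vec{H}_{22}\vec{H}_{12}^{-1}\vec{H}_{11})\vec{V}_1^J=\vec{0}$, which for generic square invertible channels admits only the trivial solution, so the conditions can only be meant (and only need to hold) as span equalities. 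Your relaxation to span alignment and the resulting invariant-subspace condition, solved by taking $\tfrac{N_E}{2}$ eigenvectors of $\vec{T}=\vec{H}_{1,2}^{-1}\vec{H}_{2,2}\vec{H}_{2,1}^{-1}\vec{H}_{1,1}$, is exactly the standard interference-alignment construction and supplies the feasibility argument the paper leaves implicit; your explicit genericity caveats (diagonalizability of $\vec{T}$, invertibility of the $\vec{H}_{i,j}$, general position of the projected signal subspaces, full row rank of the eavesdroppers' composite jamming channel, independence of the $\vec{G}_{i,j}$ from the precoders) are the right conditions to record. The remaining elements—zero-forcing the $\tfrac{N_E}{2}$-dimensional jamming space at each receiver, splitting $d_1+d_2=M-\tfrac{N_E}{2}$, and the real-interference-alignment treatment of odd $N_E$—match the paper's scheme (the odd case is borrowed from its MAC section), so your write-up can be read as a corrected and more detailed version of the paper's proof rather than a different route.
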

\begin{proof}

For this channel, the jamming is aligned using basic interference alignment method combined with a stochastic encoder similar to the one used in the MAC,
\begin{eqnarray}
\boldsymbol{H}_{21} \boldsymbol{V}_1^J= \boldsymbol{H}_{22} \boldsymbol{V}_2^J\\
\boldsymbol{H}_{11} \boldsymbol{V}_1^J= \boldsymbol{H}_{12} \boldsymbol{V}_2^J
\end{eqnarray}

Using this method $N_E$ jamming streams are aligned at $\frac{N_E}{2}$ dimensions at each receiver.
This leaves $N-\frac{N_E}{2}$ dimensions free of Jamming at each receiver. As both receivers fully receives both messages and for each receiver to decode its own message the interfering message should occupy an orthogonal space, then 

\begin{equation}
d_1+d_2 \leq M-\frac{N_E}{2}\\
\end{equation}
\vspace{-2mm}
\end{proof}  
\vspace{-5mm}
\section{Conclusion}
\label{sec:conclusion}
We studied the $K$-user MAC channel  and the two user interference channel with multiple antennas at the transmitters, legitimate receivers and eavesdroppers. Generalizing new upperbound was established and  a new achievable scheme was provided. We showed that our scheme is optimal for the interference channel and partially optimal for the MAC.

\vspace{-2mm}

\end{document}